\documentclass{llncs}

\usepackage{amsmath,amssymb}
\usepackage{enumitem}
\usepackage[dvips]{epsfig}
\usepackage{epsf}
\usepackage{float}
\usepackage{array}

\def\ovl{{\cal O}}
\def\olf{\ovl}
\def\tbar{{\overline{\bf t}}}

\def\fmod#1 #2{#1\ ({\rm mod}\ #2)}

\usepackage[dvips]{epsfig}
\usepackage{epsf}

\frontmatter

\title{Fife's Theorem Revisited}

\author{Jeffrey Shallit}

\institute{University of Waterloo,
Waterloo, ON  N2L 3G1 Canada 
\email{shallit@cs.uwaterloo.ca}
}

\begin{document}

\maketitle

\begin{abstract}
We give another proof of a theorem of Fife --- understood broadly
as providing a finite automaton that gives a complete description of all
infinite binary overlap-free words.  Our proof is significantly simpler
than those in the literature.  As an application we give a complete
characterization of the overlap-free words that are $2$-automatic.
\end{abstract}

\section{Introduction}

Repetitions in words is a well-researched topic.  Among the various themes
studied,
the binary overlap-free words play an important
role, both historically
and as an example exhibiting interesting structure.  Here by an
{\it overlap} we mean a word of the form $axaxa$, where $a$ is a single
letter and $x$ is a (possibly empty) word.

It is easy to see that neither
the finite nor the infinite binary overlap-free words
form a regular language.  Nevertheless,
in 1980, Earl Fife \cite{Fife:1980}
proved a theorem characterizing the infinite binary
overlap-free words as encodings of paths in a finite automaton.
His theorem was rather complicated to state and the proof was difficult.
Berstel \cite{Berstel:1994} later simplified the exposition, and
both Carpi \cite{Carpi:1993a} and Cassaigne \cite{Cassaigne:1993b} gave
an analogous analysis for the case of finite words.  Also see
\cite{Blondel&Cassaigne&Jungers:2009}.

In this note we show how to use the factorization
theorem of Restivo and Salemi \cite{Restivo&Salemi:1985a}
to give an alternate (and, we hope,
significantly simpler) proof
of Fife's theorem --- here understood in the general sense of providing
a finite automaton whose paths encode all infinite binary overlap-free words.

As a consequence we are able to disprove a conjecture on the fragility
of overlap-free words.

\section{Notation}

Let $\Sigma$ be a finite alphabet.  We let $\Sigma^*$ denote the set
of all finite words over $\Sigma$ and $\Sigma^\omega$ denote the
set of all (right-) infinite words over $\Sigma$.  We say
$y$ is a {\em factor} of a word $w$ if there exist words
$x, z$ such that $w = xyz$.

If $x$ is a finite word, then $x^\omega$ represents the
infinite word $xxx \cdots$.

As mentioned above,
an {\em overlap} is a word of the form $axaxa$, where $a \in \Sigma$
and $x \in \Sigma^*$.  An example of an overlap in English is the word
{\tt alfalfa}.  A finite or infinite
word is {\em overlap-free} if it contains no finite
factor that is an overlap. 

From now on we fix $\Sigma = \lbrace 0,1 \rbrace$.  The most famous
infinite binary overlap-free word is $\bf t$, the Thue-Morse word,
defined as the fixed point, starting with $0$, of the Thue-Morse
morphism $\mu$, which maps $0$ to $01$ and $1$ to $10$.  
We have
$$ {\bf t} = t_0 t_1 t_2 \cdots = 0110100110010110 \cdots .$$
The morphism $\mu$ has a second fixed point, $\tbar = \mu^\omega(1)$, which is
obtained from $\bf t$ by applying the complementation coding
defined by $\overline{0} = 1$ and $\overline{1} = 0$.

We let $\ovl$ denote the set of (right-) infinite binary overlap-free
words.

We now recall the infinite version of the
factorization theorem of Restivo and Salemi 
\cite{Restivo&Salemi:1985a} as stated in
\cite[Lemma 3]{Allouche&Currie&Shallit:1998}.

\begin{theorem}
Let ${\bf x} \in \ovl$, and let $P = \lbrace p_0, p_1, p_2, p_3, p_4 
\rbrace$, where $p_0 = \epsilon$, $p_1 = 0$, $p_2 = 00$,
$p_3 = 1$, and $p_4 = 11$.  Then there exists ${\bf y} \in \ovl$ and
$p \in P$ such that ${\bf x} = p \mu({\bf y})$.  Furthermore, this
factorization is unique, and $p$ is uniquely determined by inspecting
the first 5 letters of ${\bf x}$.
\end{theorem}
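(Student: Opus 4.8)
The plan is to reduce the statement to two ingredients: some elementary facts about the morphism $\mu$, and one combinatorial lemma on the positions of repeated letters in an overlap-free word. Call $i$ a \emph{break} of a word $w$ if $w_i=w_{i+1}$. First I would record the standard facts. The words $000$, $111$, $01010$, $10101$ are overlaps, so no word in $\ovl$ contains them; in particular an infinite overlap-free word is not eventually alternating, so it has infinitely many breaks, and if $j<k$ are breaks with no break strictly in between, then $w_{j+1},\ldots,w_k$ alternate, forcing $k-j\le 3$. Concerning $\mu$: it is injective on $\Sigma^*$ and on $\Sigma^\omega$; a word lies in the image of $\mu$ iff $w_{2i}\ne w_{2i+1}$ for all $i$ (since $\lbrace\mu(0),\mu(1)\rbrace=\lbrace 01,10\rbrace$ is exactly the set of two-letter words with distinct letters); and $w$ is overlap-free iff $\mu(w)$ is --- one direction is Thue's theorem, and for the other, an overlap $auaua$ in $w$ yields, writing $\mu(a)=b\bar b$, the factor $\bar b\,\mu(u)\,b\,\bar b\,\mu(u)\,b\,\bar b=(\bar b)(\mu(u)b)(\bar b)(\mu(u)b)(\bar b)$ of $\mu(w)$, again an overlap. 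This last fact has a convenient corollary: once a factorization ${\bf x}=p\mu({\bf y})$ has been produced with ${\bf x}\in\ovl$, the word $\mu({\bf y})$ is a factor of ${\bf x}$, hence overlap-free, hence ${\bf y}\in\ovl$ --- so the clause ``${\bf y}\in\ovl$'' requires no separate argument.

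The heart of the matter is the claim that \emph{in an infinite overlap-free word, all breaks at positions $\ge 1$ have the same parity.} I would prove this by taking, among all pairs of breaks $j<k$ with $j\ge 1$ and $j\not\equiv k\pmod 2$, one with $k-j$ smallest, and deriving a contradiction. Minimality forces no break strictly between $j$ and $k$ (a break at $m$ with $j<m<k$ would give a shorter bad pair $(j,m)$ or $(m,k)$); hence $w_{j+1},\ldots,w_k$ alternate, so $k-j$ is odd and at most $3$, and $k-j=1$ would create a cube, whence $k=j+3$. Setting $b=w_j$, alternation together with the absence of $000$ and $111$ then pins down $w_{j-1}w_j\cdots w_{j+5}=\bar b\,b\,b\,\bar b\,b\,b\,\bar b$ (here $w_{j-1}$ exists because $j\ge 1$, and $w_{j+5}$ because the word is infinite); but this word equals $(\bar b)(bb)(\bar b)(bb)(\bar b)$, an overlap --- contradiction. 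I expect this minimal-counterexample step to be the only genuinely delicate point; everything else is essentially formal.

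Granting the parity lemma, existence is immediate. Let $\pi$ be the common parity of the breaks of ${\bf x}$ at positions $\ge 1$ (there are infinitely many breaks and at most one at position $0$, hence infinitely many at positions $\ge 1$). Choose $p\in P$ with $|p|\equiv\pi+1\pmod 2$: if $\pi$ is even, take $p=x_0$ (so $|p|=1$); if $\pi$ is odd, take $p=x_0x_1$ when $x_0=x_1$ (so $p\in\lbrace 00,11\rbrace$, $|p|=2$) and $p=\epsilon$ when $x_0\ne x_1$. By the choice of parity, and since any break at position $0$ lies strictly before $p$ unless $p=\epsilon$ (in which case $x_0\ne x_1$, so there is none), the suffix $p^{-1}{\bf x}$ has no break at an even position; that is, $(p^{-1}{\bf x})_{2i}\ne(p^{-1}{\bf x})_{2i+1}$ for all $i$. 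Hence $p^{-1}{\bf x}=\mu({\bf y})$ for some ${\bf y}\in\Sigma^\omega$, i.e. ${\bf x}=p\mu({\bf y})$, with ${\bf y}\in\ovl$ by the corollary above.

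For uniqueness, note that any factorization ${\bf x}=p\mu({\bf y})$ forces ${\bf x}$ to have no break at a position $\ge|p|$ that is $\equiv|p|\pmod 2$; since ${\bf x}$ has breaks at arbitrarily large positions, all sufficiently large breaks lie in the opposite parity class, so any two factorizations ${\bf x}=p\mu({\bf y})=p'\mu({\bf y}')$ satisfy $|p|\equiv|p'|\pmod 2$. If that parity is odd, then $|p|=|p'|=1$; if it is even, then $|p|,|p'|\in\lbrace 0,2\rbrace$, and $|p|=0$ forces $x_0\ne x_1$ while $|p|=2$ forces $x_0=x_1$, so $|p|=|p'|$ in this case too. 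Hence $p=p'=x_0\cdots x_{|p|-1}$, and then $\mu({\bf y})=\mu({\bf y}')$ gives ${\bf y}={\bf y}'$ by injectivity of $\mu$. Finally, since ${\bf x}$ avoids $01010$ and $10101$, the prefix $x_0x_1x_2x_3x_4$ is not alternating and hence contains a break; a short finite check --- using that $x_5$, and in one subcase $x_6$, are then forced --- shows that $x_0x_1x_2x_3x_4$ already determines both $\pi$ and whether $x_0=x_1$, hence determines $p$. This is most conveniently presented as a small table.
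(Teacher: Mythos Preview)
The paper does not actually prove this theorem: it is quoted as a known result, attributed to Restivo and Salemi and cited via \cite{Allouche&Currie&Shallit:1998}. So there is no proof in the paper to compare against. That said, your argument is essentially the standard Restivo--Salemi argument and is correct in outline and in almost every detail.

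One small slip: your claim that consecutive breaks $j<k$ satisfy $k-j\le 3$ is not quite right. The alternating block $w_{j+1}\cdots w_k$ has length $k-j$, and avoiding $01010$ and $10101$ only forces $k-j\le 4$; indeed $k-j=4$ occurs, e.g.\ in the Thue--Morse word, which contains the factor $0010110$ with breaks at relative positions $0$ and $4$ and none in between. This does not damage your parity lemma, since there you also know $k-j$ is odd, and ``odd and ${}\le 4$'' gives the same set $\{1,3\}$ as ``odd and ${}\le 3$''. A second, harmless imprecision: in the final finite check you say $x_6$ is needed in one subcase, but in fact $x_5$ suffices. When the only break among positions $0,1,2,3$ is at position~$0$, one has $x_0\cdots x_4=aa\bar a a\bar a$, and overlap-freeness forces $x_5=\bar a$ (else $x_1\cdots x_5=a\bar a a\bar a a$), giving a break at position~$4$ and hence $\pi$ even; no further letter is required.
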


We can now iterate the factorization theorem to get

\begin{corollary}
Every infinite overlap-free word $\bf x$ can be written 
uniquely in the form
\begin{equation}
 {\bf x} = p_{i_1} \mu( p_{i_2} \mu ( p_{i_3} \mu ( \cdots ) ) ) \label{qq}
\end{equation}
with $i_j \in \lbrace 0, 1, 2, 3, 4 \rbrace$ for 
$j \geq 1$, subject to the understanding
that if there exists $c$ such that $i_j = 0$ for
$j \geq c$, then we also need to specify whether the ``tail'' of the
expansion represents $\mu^\omega(0) = {\bf t}$ or $\mu^\omega(1) = \tbar$.
Furthermore, every truncated expansion
$$p_{i_1} \mu(p_{i_2} \mu (p_{i_3} \mu (\cdots p_{i_{n-1}} \mu(p_{i_n})
\cdots )))$$
is a prefix of $\bf x$, with the understanding that if
$i_n = 0 $, then we need to replace $0$ with either
$1$ (if the ``tail'' represents $\bf t$) or $3$ (if the ``tail''
represents $\tbar$).
\end{corollary}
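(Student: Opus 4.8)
The plan is to iterate the factorization theorem and then verify that the resulting nested expression loses no information about ${\bf x}$. Set ${\bf y}_0 := {\bf x}$; assuming ${\bf y}_{n-1} \in \ovl$ has been constructed, apply the factorization theorem to it to obtain $i_n \in \{0,1,2,3,4\}$ and ${\bf y}_n \in \ovl$ with ${\bf y}_{n-1} = p_{i_n}\mu({\bf y}_n)$. Substituting repeatedly gives, for every $n \geq 1$, the identity ${\bf x} = p_{i_1}\mu(p_{i_2}\mu(\cdots p_{i_n}\mu({\bf y}_n)\cdots))$, and this defines the sequence $(i_j)_{j \geq 1}$ appearing in \eqref{qq}.

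To see that each truncated expansion is a prefix of ${\bf x}$, I would use only that $\mu$ is a morphism --- so $u$ a prefix of $v$ implies $\mu(u)$ a prefix of $\mu(v)$ --- together with the trivial fact that left concatenation preserves prefixes. Since ${\bf y}_{n-1} = p_{i_n}\mu({\bf y}_n)$, the word $p_{i_n}$ is a prefix of ${\bf y}_{n-1}$; peeling off the nested applications of $\mu$ one layer at a time then upgrades this to ``$p_{i_1}\mu(p_{i_2}\mu(\cdots p_{i_{n-1}}\mu(p_{i_n})\cdots))$ is a prefix of ${\bf x}$,'' and the same comparison, applied to $p_{i_n}$ and its one-step refinement $p_{i_n}\mu(p_{i_{n+1}})$, shows the successive truncations are nested. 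When $i_n = 0$ this truncation merely repeats the previous one, so there I would instead note that ${\bf y}_{n-1} = \mu({\bf y}_n)$ begins with $\mu(a) \in \{01,10\}$, where $a$ is the first letter of ${\bf y}_n$ --- which, when $i_j = 0$ for all $j \geq n$, is exactly the first letter of the ${\bf t}$ or $\tbar$ represented by the tail --- so $p_1 = 0$ (if $a=0$) or $p_3 = 1$ (if $a=1$) is a prefix of ${\bf y}_{n-1}$; rerunning the peeling argument with $p_{i_n}$ replaced accordingly produces the strictly longer prefix described in the statement.

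It remains to verify that \eqref{qq}, with its side datum, determines ${\bf x}$ and that the representation is unique. If $i_j \neq 0$ for infinitely many $j$, then for those $j = n$ the block $\mu^{n-1}(p_{i_n})$ has length at least $2^{n-1}$, so the (nested) truncated prefixes have unbounded length and hence converge to ${\bf x}$, no extra data being needed. If instead $i_j = 0$ for all $j \geq c$, then ${\bf y}_c = \mu^k({\bf y}_{c+k})$ for all $k$; since $\mu$ fixes first letters, the words ${\bf y}_j$ with $j \geq c$ all begin with a common letter $a$, the length-$2^k$ prefix of ${\bf y}_c$ equals $\mu^k(a)$, and letting $k \to \infty$ gives ${\bf y}_c = \mu^\omega(a) \in \{{\bf t},\tbar\}$; thus $i_1,\ldots,i_{c-1}$ together with the single bit ``${\bf t}$ or $\tbar$'' recover ${\bf x}$ via the finite expression $p_{i_1}\mu(\cdots p_{i_{c-1}}\mu(\mu^\omega(a))\cdots)$, exactly as in the statement. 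For uniqueness, note that by the factorization theorem $p_{i_1}$ is the only $p \in P$, and (since $\mu$ is injective) ${\bf y}_1$ the only ${\bf y} \in \Sigma^\omega$, with ${\bf x} = p\mu({\bf y})$; inductively each $(p_{i_n},{\bf y}_n)$ is forced by ${\bf x}$, as is the tail bit (the first letter of ${\bf y}_c$) in the eventually-zero case. In particular any expression of the form \eqref{qq} equal to ${\bf x}$ has $j_1 = i_1$ and tail representing ${\bf y}_1$, so iterating shows it coincides with the one just built.

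The only step needing genuine care is the bookkeeping around the vanishing truncations when some $i_n = 0$, and in particular matching the eventually-zero tails with the advertised ${\bf t}$/$\tbar$ bit; everything else is a routine induction on the nesting depth, using that $\mu$ is a length-doubling morphism that preserves prefixes and first letters.
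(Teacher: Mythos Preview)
Your proof is correct and follows the same approach as the paper---iterate the Restivo--Salemi factorization theorem and invoke its uniqueness clause---though you supply considerably more detail than the paper does. The paper's own proof is a single sentence (``each $p_i$ is uniquely determined by the first $5$ characters of the associated word''), leaving the existence, the prefix claims, and the tail bookkeeping implicit; your version spells all of this out carefully.
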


\begin{proof}
The form (\ref{qq}) is unique, since each $p_i$ is uniquely determined
by the first $5$ characters of the associated word.
\end{proof}

Thus, we can associate each infinite binary
overlap-free word $\bf x$ with the
essentially unique infinite sequence
of indices ${\bf i} := (i_j)_{j \geq 0}$ coding elements in $P$,
as specified by (\ref{qq}).  If $\bf i$ ends in $0^\omega$, then
we need an additional element (either $1$ or $3$) to disambiguate
between $\bf t$ and $\tbar$ as the ``tail''.  
In our notation, we separate this additional element with a
semicolon so that, for example, the string $000\cdots; 1$ represents
$\bf t$ and $000\cdots; 3$ represents $\tbar$.

Other sequences of interest include $203000\cdots; 1$, which codes
$ 001001 \tbar$, the lexicographically least infinite word, and
$2 (31)^\omega$, which codes the word having, in the $i$'th position,
the number of $0$'s in the binary expansion of $i$.

Of course, not every possible sequence of $(i_j)_{j \geq 1}$ of indices
corresponds to an
infinite overlap-free word.  For example, every infinite word coded
by $21 \cdots$ represents
$00 \mu( 0 \mu( \ldots))$ and hence
begins with $000$ and has an overlap.
Our goal is to characterize precisely,
using a finite automaton, those infinite sequences corresponding to
overlap-free words.

We recall some basic facts about overlap-free words.

\begin{lemma}
Let $a \in \Sigma$.  Then
\begin{itemize}
\item[(a)] ${\bf x} \in \olf \iff \mu({\bf x}) \in \olf$;
\item[(b)] $ a \, \mu({\bf x}) \in \olf \iff \overline{a} \, {\bf x} \in \olf$;
\item[(c)] $ a \, a \, \mu({\bf x}) \in \olf \iff \overline{a} 
\, {\bf x} \in \olf$
and $\bf x$ begins $\overline{a} \, a \, \overline{a}$.
\end{itemize}
\label{l1}
\end{lemma}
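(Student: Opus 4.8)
The plan is to prove the three parts in order, using (a) in the proof of (b) and (b) in the proof of (c), and to lean throughout on the elementary ``block structure'' of images of $\mu$: writing $\mu({\bf w}) = \mu(w_0)\mu(w_1)\cdots$, the letter at position $2i$ is $w_i$ and the letter at position $2i+1$ is $\overline{w_i}$, so a factor $cc$ (two equal consecutive letters) can occur in $\mu({\bf w})$ only at an odd position, and $\mu$ is injective on finite words.

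For (a), the direction $\mu({\bf x}) \in \olf \Rightarrow {\bf x}\in\olf$ is immediate: if ${\bf x}$ had an overlap $auaua$ then $\mu({\bf x})$ would contain $\mu(a)\mu(u)\mu(a)\mu(u)\mu(a) = (\mu(au))^2\mu(a)$, a word of period $2|au|$ and length $4|au|+2$, hence containing an overlap. For the converse I would argue by contradiction: assuming ${\bf x}\in\olf$ but $\mu({\bf x})$ contains an overlap, pick one, $v = czcz c$, of shortest length and set $q = |cz|$. The block structure rules out $q \le 3$ after a short check, split by the parity of the position at which $v$ begins. For $q \ge 4$, expanding the period relations $v[i]=v[i+q]$ through the block structure yields: when $q$ is odd, three consecutive equal letters inside ${\bf x}$; when $q$ is even, a factor of ${\bf x}$ of length $q+1$ and period $q/2$ (obtained by $\mu^{-1}$-decoding the two block-aligned halves of $v$) --- an overlap in ${\bf x}$ either way, contradicting ${\bf x}\in\olf$. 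This forward direction is the one genuinely technical step; it is classical, and one could also simply cite it.

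For (b), applying (a) to $\overline a\,{\bf x}$ gives $\overline{a}\,{\bf x}\in\olf \iff \mu(\overline a\,{\bf x}) = \overline a\,a\,\mu({\bf x})\in\olf$, and $a\,\mu({\bf x})$ is a suffix of $\overline a\,a\,\mu({\bf x})$, so the content of (b) is that adjoining the missing letter $\overline a$ in front of $a\,\mu({\bf x})$ does not affect overlap-freeness. If $\overline a{\bf x}\notin\olf$ then an overlap either lies inside ${\bf x}$, in which case ${\bf x}\notin\olf$ and we are done by (a), or it is a prefix $\overline a u\overline a u\overline a$; in the latter case ${\bf x}$ begins $u\overline a u\overline a$, so $a\mu({\bf x})$ begins with $(a\,\mu(u)\,\overline a)^2 a$, an overlap. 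Conversely, if $a\mu({\bf x})\notin\olf$ then either $\mu({\bf x})\notin\olf$ (done by (a)) or the overlap is a prefix $auaua$, so $\mu({\bf x})$ begins $uaua$ with period $p := |au|$. A short block-structure argument forces $p$ to be even: $p=1$ is impossible since $\mu({\bf x})$ cannot begin $aa$, and for odd $p\ge 3$ the period and block relations are contradictory ($\mu({\bf x})$ is forced to begin $c\overline c c\overline c c\cdots$, which contains $(c\overline c)^2 c$). Writing $p=2q$, the prefix $uaua$ is block-aligned, so $\mu^{-1}$-decoding shows ${\bf x}$ begins $ss$ with $|s|=q$ and $s$ ending in $\overline a$; hence $\overline a{\bf x}$ begins with the overlap $\overline a s'\overline a s'\overline a$, where $s=s'\overline a$.

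Finally (c). For ``$\Rightarrow$'': from $aa\mu({\bf x})\in\olf$ we get $a\mu({\bf x})\in\olf$ (a suffix), hence $\overline a{\bf x}\in\olf$ by (b); also $aa\mu({\bf x})$ has no $aaa$, so ${\bf x}$ begins $\overline a$, and if ${\bf x}$ began $\overline a\overline a$ or $\overline a a a$ then $aa\mu({\bf x})$ would begin $aa\,\overline a a\overline a a$ or $aa\,\overline a a a\overline a a\overline a$, containing the overlaps $a\overline a a\overline a a$ and $(aa\overline a)^2 a$ respectively; so ${\bf x}$ begins $\overline a a\overline a$. For ``$\Leftarrow$'': by (b), $a\mu({\bf x})\in\olf$, so an overlap in $aa\mu({\bf x})$ must be a prefix $ayaya$; position $1$ is $a$, so write $y=ay'$, and then $\mu({\bf x})$ begins with the word $y'aay'a$, of period $p := |y'|+2$ and length $2p-1$ (and $p\ge3$, since $\mu({\bf x})$ cannot begin $aaa$). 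Since ${\bf x}$ begins $\overline a a\overline a$, the word $\mu({\bf x})$ begins $\overline a a a\overline a\overline a a$, so it has a $cc$ (the factor $aa$) at position $1$ and, inside $y'aay'a$, a $cc$ at position $p-2$ (the middle $aa$). The latter forces $p$ odd; then $p=3$ is excluded because $\mu({\bf x})[4]=\overline a\ne a$, while for $p\ge5$ the $cc$ at position $1$, propagated by the period, produces a $cc$ at the even position $p+1$ --- impossible. Hence no such overlap exists and $aa\mu({\bf x})\in\olf$.
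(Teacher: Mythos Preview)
Your proof is correct and self-contained, whereas the paper does not prove this lemma at all: it simply writes ``See, for example, \cite{Allouche&Currie&Shallit:1998}'' and moves on, treating the result as folklore. So there is no meaningful approach to compare against; you have supplied what the paper deliberately outsourced.

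A few remarks on presentation. In part~(a), the forward direction is indeed the only substantive step, and your sketch (short periods by hand, then a parity split on $q$ using the block structure) is the standard route; the phrase ``three consecutive equal letters inside ${\bf x}$'' for odd $q$ is slightly loose---what one actually obtains first is a contradiction from the interaction of period $q$ and the block alternation, and only for $q\ge 5$ does it manifest as $\mu({\bf x})$ beginning $c\overline c c\overline c c$. You make the same move in part~(b) for odd $p\ge 3$, and there your parenthetical is accurate for $p\ge 5$ while $p=3$ falls out by a direct parity clash; either way the case is empty, so the argument stands. Parts~(b) and~(c) are cleanly done: the reduction via $\mu(\overline a\,{\bf x}) = \overline a\, a\,\mu({\bf x})$ is the right way to leverage~(a), and the parity analysis of the prefix period in~(c) (forcing $p$ odd via the $cc$ at position $p-2$, then killing $p=3$ and $p\ge 5$ separately) is correct and efficient.
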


\begin{proof}
See, for example, \cite{Allouche&Currie&Shallit:1998}.
\end{proof}

We now define 11 subsets of $\ovl$:
\begin{eqnarray*}
A &=& \ovl \\
B &=& \lbrace {\bf x} \in \Sigma^\omega \ : \ 1 {\bf x} \in \ovl \rbrace \\
C &=& \lbrace {\bf x} \in \Sigma^\omega \ : \ 1 {\bf x} \in \ovl 
\text{ and } {\bf x} \text{ begins with } 101 \rbrace \\
D &=& \lbrace {\bf x} \in \Sigma^\omega \ : \ 0 {\bf x} \in \ovl \rbrace \\
E &=& \lbrace {\bf x} \in \Sigma^\omega \ : \ 0 {\bf x} \in \ovl 
\text{ and } {\bf x} \text{ begins with } 010 \rbrace  \\
F &=& \lbrace {\bf x} \in \Sigma^\omega \ : \ 0 {\bf x} \in \ovl 
\text{ and } {\bf x} \text{ begins with } 11 \rbrace \\
G &=& \lbrace {\bf x} \in \Sigma^\omega \ : \ 0 {\bf x} \in \ovl  
\text{ and } {\bf x} \text{ begins with } 1 \rbrace  \\
H &=& \lbrace {\bf x} \in \Sigma^\omega \ : \ 1 {\bf x} \in \ovl  
\text{ and } {\bf x} \text{ begins with } 1 \rbrace \\
I &=& \lbrace {\bf x} \in \Sigma^\omega \ : \ 1 {\bf x} \in \ovl 
\text{ and } {\bf x} \text{ begins with } 00 \rbrace \\
J &=& \lbrace {\bf x} \in \Sigma^\omega \ : \ 1 {\bf x} \in \ovl  
\text{ and } {\bf x} \text{ begins with } 0 \rbrace \\
K &=& \lbrace {\bf x} \in \Sigma^\omega \ : \ 0 {\bf x} \in \ovl  
\text{ and } {\bf x} \text{ begins with } 0 \rbrace \\
\end{eqnarray*}

Next, we describe the relationships between these classes:

\begin{lemma}
Let $\bf x$ be an infinite binary word.  Then
\begin{eqnarray}
{\bf x} \in A & \iff & \mu({\bf x}) \in A  \label{AA}\\
{\bf x} \in B & \iff & 0 \mu({\bf x}) \in A \label{AB} \\
{\bf x} \in C & \iff & 00 \mu({\bf x}) \in A \label{AC} \\
{\bf x} \in D & \iff & 1 \mu({\bf x}) \in A \label{AD} \\
{\bf x} \in E & \iff & 11 \mu({\bf x}) \in A \label{AE} \\
{\bf x} \in D & \iff & \mu({\bf x}) \in B \label{BD} \\
{\bf x} \in B & \iff & 0 \mu({\bf x}) \in B \label{BB} \\
{\bf x} \in E & \iff & 1 \mu({\bf x}) \in B \label{BE} \\
{\bf x} \in B & \iff & \mu({\bf x}) \in D \label{DB} \\
{\bf x} \in D & \iff & 1 \mu({\bf x}) \in D \label{DD} \\
{\bf x} \in C & \iff & 0 \mu({\bf x}) \in D \label{DC} \\
{\bf x} \in I & \iff & \mu({\bf x}) \in E \label{EI} \\
{\bf x} \in C & \iff & 0 \mu({\bf x}) \in E \label{EC} \\
{\bf x} \in F & \iff & \mu({\bf x}) \in C \label{CF} \\
{\bf x} \in E & \iff & 1 \mu({\bf x}) \in C \label{CE} \\
{\bf x} \in J & \iff & 0 \mu({\bf x}) \in I \label{IJ} \\
{\bf x} \in G & \iff & 1 \mu({\bf x}) \in F \label{FG} \\
{\bf x} \in K & \iff & \mu({\bf x}) \in J \label{JK} \\
{\bf x} \in J & \iff & \mu({\bf x}) \in K \label{KJ} 
\end{eqnarray}
\begin{eqnarray}
{\bf x} \in B & \iff & 0 \mu({\bf x}) \in J \label{JB} \\
{\bf x} \in C & \iff & 0 \mu({\bf x}) \in K \label{KC} \\
{\bf x} \in H & \iff & \mu({\bf x}) \in G \label{GH} \\
{\bf x} \in G & \iff & \mu({\bf x}) \in H \label{HG} \\
{\bf x} \in D & \iff & 1 \mu({\bf x}) \in G \label{GD} \\
{\bf x} \in E & \iff & 1 \mu({\bf x}) \in H \label{HE}
\end{eqnarray}
\label{AAK}
\end{lemma}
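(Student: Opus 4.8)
The plan is to derive all twenty‑four equivalences from Lemma~\ref{l1} by a single uniform mechanism, using besides that lemma only two trivial facts about the Thue--Morse morphism. First, $\mu(a)=a\,\overline{a}$ for $a\in\Sigma$, so that $a\,\overline{a}\,\mu({\bf x})=\mu(a\,{\bf x})$. Second, whether a word of the form $\mu({\bf x})$, $a\,\mu({\bf x})$ or $a\,a\,\mu({\bf x})$ begins with a prescribed string of length at most three is determined by the first one or two letters of ${\bf x}$: for instance $\mu({\bf x})$ begins with $b$ iff ${\bf x}$ does, $\mu({\bf x})$ begins with $010$ iff ${\bf x}$ begins with $00$, $\mu({\bf x})$ begins with $101$ iff ${\bf x}$ begins with $11$, and $b\,\mu({\bf x})$ begins with $bb$ iff ${\bf x}$ begins with $b$.

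Every one of the equivalences has the form ${\bf x}\in X\iff w\,\mu({\bf x})\in Y$, where $w$ is one of $\epsilon,0,1,00,11$ and $X,Y$ range over the eleven classes $A,\dots,K$. Writing the membership condition for $Y$ as ``$b\,{\bf z}\in\olf$ and ${\bf z}$ begins with $s$'' (with $b\in\Sigma$ and $s$ a fixed, possibly empty, string; when $Y=A$ there is no ``$b$'' and no ``$s$''), the right-hand side unwinds to ``$b\,w\,\mu({\bf x})\in\olf$ and $w\,\mu({\bf x})$ begins with $s$''. I would handle the overlap-freeness part by distinguishing how $w$ sits relative to $b$:
\begin{itemize}
\item if $Y=A$, then $w\,\mu({\bf x})\in\olf$ is covered directly by Lemma~\ref{l1}(a), (b) or (c) according as $w=\epsilon$, $w\in\{0,1\}$, or $w\in\{00,11\}$ --- the five prefixes here being exactly those appearing in the three parts of that lemma;
\item if $Y\neq A$ and $w=\epsilon$, apply Lemma~\ref{l1}(b) to $b\,\mu({\bf x})$, getting $b\,\mu({\bf x})\in\olf\iff\overline{b}\,{\bf x}\in\olf$;
\item if $w=\overline{b}$, use $b\,\overline{b}\,\mu({\bf x})=\mu(b\,{\bf x})$ and Lemma~\ref{l1}(a), getting $b\,w\,\mu({\bf x})\in\olf\iff b\,{\bf x}\in\olf$;
\item if $w=b$, apply Lemma~\ref{l1}(c) to $b\,b\,\mu({\bf x})$, getting $b\,b\,\mu({\bf x})\in\olf\iff\overline{b}\,{\bf x}\in\olf$ and ${\bf x}$ begins with $\overline{b}\,b\,\overline{b}$.
\end{itemize}
Finally I would translate ``$w\,\mu({\bf x})$ begins with $s$'' into a prefix condition on ${\bf x}$ by the second fact above, and check that the resulting conjunction is precisely the definition of $X$.

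The verification is then purely mechanical, and I do not foresee any genuine obstacle beyond the bookkeeping of these twenty‑four essentially identical case analyses; because all words involved are infinite, no degenerate ``too short'' cases arise. The only mild subtlety is that in several cases (for example \eqref{EC}) the prefix condition coming from ``$w\,\mu({\bf x})$ begins with $s$'' is already implied by the prefix condition produced by Lemma~\ref{l1}(c), so the conjunction collapses exactly to the definition of $X$. To illustrate: \eqref{AA} is literally Lemma~\ref{l1}(a); \eqref{BB} holds because $0\,\mu({\bf x})\in B\iff 1\,0\,\mu({\bf x})=\mu(1{\bf x})\in\olf\iff 1{\bf x}\in\olf\iff{\bf x}\in B$; and \eqref{EC} holds because $0\,\mu({\bf x})\in E$ unwinds to ``$0\,0\,\mu({\bf x})\in\olf$ and $0\,\mu({\bf x})$ begins with $010$'', where the first condition is, by Lemma~\ref{l1}(c), ``$1{\bf x}\in\olf$ and ${\bf x}$ begins with $101$'', and the second is ``${\bf x}$ begins with $1$'', which is redundant, leaving exactly the definition of $C$.
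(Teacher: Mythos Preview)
Your proposal is correct and is essentially the paper's own proof, just packaged as a uniform template rather than case-by-case: the paper too derives each equivalence either directly from Lemma~\ref{l1}(a)--(c), or via the identity $b\,\overline{b}\,\mu({\bf x})=\mu(b\,{\bf x})$, together with the obvious prefix translations, and your worked examples \eqref{BB} and \eqref{EC} coincide verbatim with the paper's arguments. The only organizational difference is that the paper groups the twenty-five items by which of your four bullet points applies, whereas you state the four bullets once and leave the instantiation to the reader.
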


\begin{proof}

\ \\

(\ref{AA}):  Follows immediately from Lemma~\ref{l1} (a).
\medskip

(\ref{AB}), (\ref{AD}), (\ref{BD}), (\ref{DB}):  Follow
immediately from Lemma~\ref{l1} (b).
\medskip

(\ref{AC}), (\ref{AE}), (\ref{BE}), (\ref{DC}):  Follow
immediately from Lemma~\ref{l1} (c).
\medskip

(\ref{BB}):  $0 \mu({\bf x}) \in B
\iff 1 0 \mu({\bf x}) 
= \mu( 1 \, {\bf x}) \in \olf \iff 1 \, {\bf x} \in \olf$.
\medskip

(\ref{DD}):  Just like (\ref{BB}).
\medskip

(\ref{EI}):  $\mu({\bf x}) \in E \iff
(0 \mu({\bf x}) \in \olf$ and $\mu({\bf x})$ begins with $010)
\iff (1 {\bf x} \in \olf $ and ${\bf x}$ begins with $00$).
\medskip

(\ref{CF}):  Just like (\ref{EI}).
\medskip

(\ref{EC}):  $0 \mu({\bf x}) \in E \iff 
(00 \mu({\bf x}) \in \olf$ and $0 \mu({\bf x})$ begins with $010)
\iff (1 {\bf x} \in \olf$ and ${\bf x}$ begins with $101) $.
\medskip

(\ref{CE}):  Just like (\ref{EC}).
\medskip

(\ref{IJ}):  $0 \mu({\bf x}) \in I \iff
(10 \mu({\bf x}) \in \olf$ and $0 \mu({\bf x})$ begins with $00)
\iff (\mu(1 {\bf x}) \in \olf$ and ${\bf x}$ begins with $0)
\iff (1{\bf x} \in \olf$ and ${\bf x}$ begins with $0$).
\medskip

(\ref{FG}):  Just like (\ref{IJ}).
\medskip

(\ref{JK}):  $\mu({\bf x}) \in J \iff
(1 \mu({\bf x}) \in \olf$ and $\mu({\bf x})$ begins with $0) \iff
(0 {\bf x} \in \olf$ and ${\bf x}$ begins with $0)$.
\medskip

(\ref{GH}), (\ref{KJ}), (\ref{HG}):  Just like (\ref{JK}).
\medskip

(\ref{JB}):  $0 \mu({\bf x}) \in J \iff
(10 \mu({\bf x}) \in \olf$ and $0 \mu({\bf x})$
begins with $0) \iff \mu(1{\bf x}) \in \olf \iff 1{\bf x} \in \olf$.
\medskip

(\ref{GD}):  Just like (\ref{JB}).
\medskip

(\ref{KC}):  $0 \mu({\bf x}) \in K \iff
(0 0 \mu({\bf x}) \in \olf$ and $0 \mu({\bf x})$ begins with $0) \iff
(1 {\bf x} \in \olf$ and $\bf x$ begins with $101)$.
\medskip

(\ref{HE}):  Just like (\ref{KC}).

\end{proof}

We can now use the result of the previous lemma to create an
$11$-state
automaton that accepts all infinite sequences 
$(i_j)_{j \geq 1}$
over $\Delta := \lbrace 0,1,2,3, 4 \rbrace$ such that
$p_{i_1} \mu(p_{i_2} \mu(p_{i_3} \mu(\cdots )))$ is
overlap-free.  Each state represents one of the sets
$A, B, \ldots, K$ defined above, and the transitions are 
given by Lemma~\ref{AAK}.

Of course, we also need to verify that transitions not shown
correspond to the empty set of infinite words.
For example, a transition out of $B$ on the symbol $2$ would
correspond to the set
$\lbrace {\bf x} \ : 100 \mu({\bf x}) \in \olf \rbrace$.
But if $\bf x$ begins with $0$, then $100 \mu({\bf x}) = 10001 \cdots$
contains the overlap $000$ as a factor, whereas  if
$\bf x$ begins with $10$, then $100\mu({\bf x}) = 1001001 \cdots$
contains the overlap $1001001$ as a factor, and if
$\bf x$ begins with $11$, then $100\mu({\bf x}) = 1001010 \cdots$
contains $01010$ as a factor.  Similarly, we can (somewhat tediously)
verify that all other transitions not given in Figure~\ref{nice} 
correspond to the empty set:

\begin{eqnarray*}
\delta(B, 4) &=& \lbrace
{\bf x} \in \Sigma^\omega \ : \ 111 \mu({\bf x}) \in {\cal O} 
\rbrace = \emptyset\\
\delta(D, 2) &=& \lbrace {\bf x} \in \Sigma^\omega \ : \ 000 \mu({\bf x}) \in {\cal O} \rbrace 
= \emptyset\\
\delta(D, 4) &=& \lbrace {\bf x}  \in \Sigma^\omega\ :\  011 \mu({\bf x}) \in {\cal O} \rbrace
= \emptyset \\
\delta(C, 1) &=& \lbrace {\bf x} \in \Sigma^\omega \ : \ 10 \mu({\bf x}) \in {\cal O} \text{ and }
0 \mu({\bf x}) \text{ begins with } 101 \rbrace = \emptyset\\
\delta(C, 2) &=& \lbrace {\bf x} \in \Sigma^\omega \ : \ 100 \mu({\bf x}) \in {\cal O} \text{ and }
00 \mu({\bf x}) \text{ begins with } 101 \rbrace = \emptyset\\
\delta(C, 4) &=& \lbrace {\bf x} \in \Sigma^\omega \ : \ 111 \mu({\bf x}) \in {\cal O} 
\text{ and } 11 \mu({\bf x})  \text{ begins with } 101 \rbrace = \emptyset\\
\delta(E,2) &=& \lbrace {\bf x} \in \Sigma^\omega \ : \ 000 \mu({\bf x}) \in {\cal O} 
\text{ and } 00\mu({\bf x}) \text{ begins with } 010 \rbrace = \emptyset\\
\delta(E, 3) &=& \lbrace {\bf x} \in \Sigma^\omega \ : \ 01\mu({\bf x}) \in {\cal O} 
\text{ and } 1\mu({\bf x}) \text{ begins with } 010 \rbrace = \emptyset\\
\delta(E, 4) &=& \lbrace {\bf x} \in \Sigma^\omega \ : \ 011\mu({\bf x}) \in {\cal O} 
\text{ and } 11\mu({\bf x}) \text{ begins with } 010 \rbrace = \emptyset\\
\delta(F, 0) &=& \lbrace {\bf x} \in \Sigma^\omega \ : \ 0 \mu({\bf x}) \in {\cal O}
\text{ and } \mu({\bf x}) \text{ begins with } 11 \rbrace = \emptyset\\
\delta(F, 1) &=& \lbrace {\bf x} \in \Sigma^\omega \ : \ 00 \mu({\bf x}) \in {\cal O}
\text{ and } 0\mu({\bf x}) \text{ begins with } 11 \rbrace = \emptyset\\
\delta(F, 2) &=& \lbrace {\bf x} \in \Sigma^\omega \ : \ 000\mu({\bf x}) \in {\cal O} 
\text{ and } 00\mu({\bf x}) \text{ begins with } 11 \rbrace = \emptyset\\
\delta(F, 4) &=& \lbrace {\bf x} \in \Sigma^\omega \ :  \ 011\mu({\bf x}) \in {\cal O}
\text{ and } 11\mu({\bf x}) \text{ begins with } 11 \rbrace = \emptyset\\
\delta(J, 2) &=& \lbrace {\bf x} \in \Sigma^\omega \ : \ 100 \mu({\bf x}) \in {\cal O}
\text{ and } 00\mu({\bf x}) \text{ begins with } 0 \rbrace = \emptyset\\
\delta(J, 3) &=& \lbrace {\bf x} \in \Sigma^\omega \ : \ 11 \mu({\bf x}) 
\in {\cal O} \text{ and } 1\mu({\bf x}) \text{ begins with } 0 \rbrace = \emptyset\\
\delta(J, 4) &=& \lbrace {\bf x} \in \Sigma^\omega \ : \ 111\mu({\bf x}) 
\in {\cal O} \text{ and } 11\mu({\bf x}) \text{ begins with } 0 \rbrace = \emptyset\\
\delta(K,2) &=& \lbrace {\bf x} \in \Sigma^\omega \ : \ 000 \mu({\bf x}) \in {\cal O}
\text{ and } 00\mu({\bf x}) \text{ begins with } 0 \rbrace = \emptyset\\
\delta(K,3) &=& \lbrace {\bf x} \in \Sigma^\omega \ : \ 01 \mu({\bf x}) \in {\cal O} 
\text{ and } 1\mu({\bf x}) \text{ begins with } 0 \rbrace = \emptyset\\
\delta(K, 4) &=& \lbrace {\bf x} \in \Sigma^\omega \ : \  011\mu({\bf x} \in {\cal O}) 
\text{ and } 11\mu({\bf x}) \text{ begins with } 0 \rbrace = \emptyset\\
\end{eqnarray*}

The proof of most of these is immediate.  (We have not listed
$\delta(I,a)$ for $a \in \lbrace 0,2,3,4 \rbrace$, nor 
$\delta(G,a)$ for $a \in \lbrace 1,2,4 \rbrace$, nor
$\delta(H,a)$ for $a \in \lbrace 1,2,4 \rbrace$, as these are symmetric
with other cases.)  The only one that requires some thought is
$\delta(F,4)$:
\begin{itemize}
\item If $\bf x$ begins $00$, then $011\mu({\bf x}) = 0110101 \cdots$, which
has $10101$ as a factor.
\item If $\bf x$ begins $01$, then $011\mu({\bf x}) = 0110110 \cdots$, which
has $0110110$ as a factor.
\item If $\bf x$ begins $1$, then $011\mu({\bf x}) = 01110 \cdots$, which
has $111$ as a factor.
\end{itemize}

\begin{figure}[H]
\begin{center}
	\epsfig{file=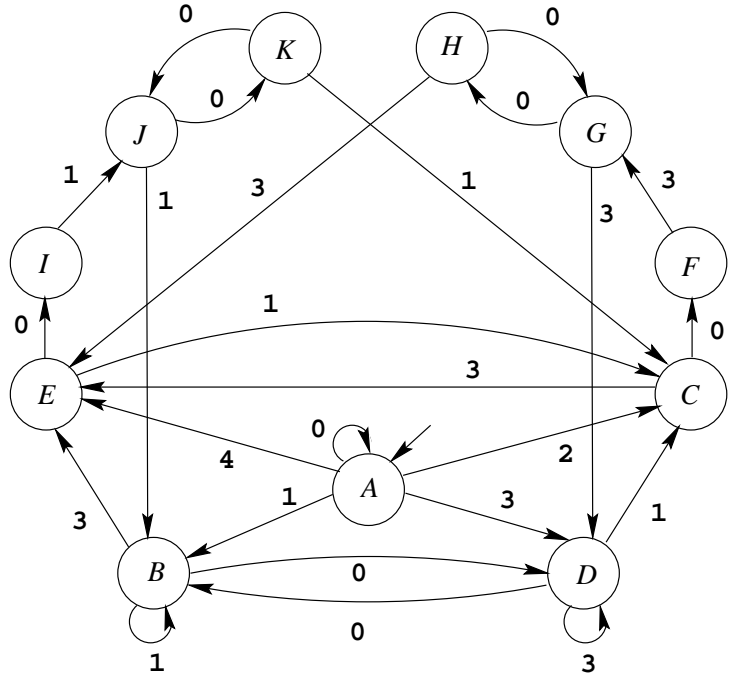}
\end{center}
\caption{Automaton coding infinite binary overlap-free words}
\label{nice}
\end{figure}

From Lemma~\ref{AAK} and the results above, we get

\begin{theorem}
    Every infinite binary overlap-free word $\bf x$ is encoded by an
infinite path, starting in $A$, through the automaton in Figure~\ref{nice}.

   Every infinite path through the automaton
not ending in $0^\omega$ codes a unique
infinite binary overlap-free word $\bf x$.  If a path $\bf i$ ends in
$0^\omega$ and this suffix corresponds to a cycle on state A or a cycle
between states B and D, then $\bf x$ is 
coded by either ${\bf i}; 1$ or
${\bf i}; 3$.  If a path $\bf i$ ends in $0^\omega$
and this suffix corresponds to a cycle between states J and K,
then $\bf x$ is coded by ${\bf i}; 1$.  If a path $\bf i$ ends
in $0^\omega$ and this suffix corresponds to a cycle between
states G and H, then $\bf x$ is coded by ${\bf i}; 3$.
\label{main}
\end{theorem}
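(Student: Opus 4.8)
The plan is to read each state $q\in\{A,B,\dots,K\}$ of the automaton as the predicate on $\Sigma^\omega$ that defines it --- so ``$A({\bf y})$'' abbreviates ${\bf y}\in\ovl$, ``$C({\bf y})$'' abbreviates ``$1{\bf y}\in\ovl$ and ${\bf y}$ begins with $101$'', and so on --- and to rephrase Lemma~\ref{AAK} together with the emptiness computations before Figure~\ref{nice} in one uniform statement: for every edge $q\xrightarrow{\,a\,}q'$ of the automaton one has $q'({\bf y})\iff q(p_a\mu({\bf y}))$ for all ${\bf y}\in\Sigma^\omega$, whereas for every pair $(q,a)$ that is not an edge, $\{{\bf y}:q(p_a\mu({\bf y}))\}=\emptyset$. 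This single fact drives everything. For the first assertion, given ${\bf x}\in\ovl$ I take its expansion ${\bf i}=(i_j)_{j\ge1}$ from the Corollary, with the ``tails'' ${\bf x}^{(0)}={\bf x}$ and ${\bf x}^{(j-1)}=p_{i_j}\mu({\bf x}^{(j)})$, each in $\ovl$ by the factorization theorem, and prove by induction on $j$ that $q_j:=\delta(A,i_1\cdots i_j)$ is defined with $q_j({\bf x}^{(j)})$: the base case is $A({\bf x})$; for the step, $q_j({\bf x}^{(j)})=q_j(p_{i_{j+1}}\mu({\bf x}^{(j+1)}))$ shows $\{{\bf y}:q_j(p_{i_{j+1}}\mu({\bf y}))\}\ne\emptyset$, hence $(q_j,i_{j+1})$ is an edge, and its equivalence yields $q_{j+1}({\bf x}^{(j+1)})$. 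Thus ${\bf i}$ is an infinite path from $A$, and by the Corollary it codes ${\bf x}$ (with the $\mu^\omega(\cdot)$ convention if ${\bf i}$ ends in $0^\omega$).

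For the second assertion, let ${\bf i}=(i_j)_{j\ge1}$ be an infinite path from $A$ through states $q_0=A,q_1,\dots$ not ending in $0^\omega$. The truncated expansions $v_n$ of the Corollary are then nested with lengths $\to\infty$, so they converge to a well-defined ${\bf x}\in\Sigma^\omega$, the unique word the path codes (uniqueness from the Corollary). It suffices that each $v_n\in\ovl$. First, each set $\{{\bf y}:q({\bf y})\}$ is nonempty: $A$ is witnessed by ${\bf t}$, and any other state, with predicate ``$a{\bf y}\in\ovl$ and ${\bf y}$ begins with $\beta$'' ($|\beta|\le3$), is witnessed by the word got from a suffix of ${\bf t}$ beginning with the factor $a\beta$ by deleting its first letter --- legitimate since ${\bf t}$ contains every overlap-free word of length $\le4$ as a factor. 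Fixing $n$ and ${\bf z}$ with $q_n({\bf z})$, put ${\bf w}_n={\bf z}$ and ${\bf w}_{k-1}=p_{i_k}\mu({\bf w}_k)$; a downward induction through the edges $q_{k-1}\xrightarrow{\,i_k\,}q_k$ gives $q_k({\bf w}_k)$ for all $k\le n$, so $A({\bf w}_0)$, i.e.\ ${\bf w}_0=p_{i_1}\mu(\cdots p_{i_n}\mu({\bf z})\cdots)\in\ovl$; since $v_n$ is a prefix of ${\bf w}_0$ (left-concatenation and $\mu$ preserve prefixes), $v_n\in\ovl$, and hence ${\bf x}\in\ovl$.

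For the third assertion, suppose $i_j=0$ for $j\ge c$. The symbol-$0$ edges are exactly those of (\ref{AA}), (\ref{BD}), (\ref{DB}), (\ref{CF}), (\ref{EI}), (\ref{JK}), (\ref{KJ}), (\ref{GH}), (\ref{HG}); since $F$ and $I$ have no outgoing $0$, the only cycles using only the symbol $0$ are the loop at $A$ and the $2$-cycles $B\!\leftrightarrow\!D$, $J\!\leftrightarrow\!K$, $G\!\leftrightarrow\!H$, so the states $q_j$ for $j\ge c-1$ all lie on one of them. Writing ${\bf x}_a:=p_{i_1}\mu(\cdots p_{i_{c-1}}\mu(\mu^\omega(a))\cdots)$, the word coded by ${\bf i};1$ is ${\bf x}_0$ and that coded by ${\bf i};3$ is ${\bf x}_1$; since the tails ${\bf x}^{(j)}$ equal $\mu^\omega(a)$ for $j\ge c-1$ and $\mu(\mu^\omega(a))=\mu^\omega(a)$, iterating the equivalence through $q_0,\dots,q_{c-1}$ as in the second part gives ${\bf x}_a\in\ovl\iff q(\mu^\omega(a))$ for any state $q$ on the relevant cycle. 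Here $\mu^\omega(0)={\bf t}$ begins with $0$, $\mu^\omega(1)=\tbar$ begins with $1$, both lie in $\ovl$, and $0{\bf t}\in\ovl$ --- hence also $1{\bf t},0\tbar,1\tbar\in\ovl$ by Lemma~\ref{l1}(b) and complementation --- because an overlap in $0\mu({\bf t})$ would force ${\bf t}$ to have a square prefix, which a short descent using ${\bf t}=\mu({\bf t})$ and injectivity of $\mu$ excludes. Evaluating: with $q=A$, and with $q=D$ (predicate ``$0{\bf y}\in\ovl$'') on $B\!\leftrightarrow\!D$, both values of $a$ pass; with $q=K$ (predicate ``$0{\bf y}\in\ovl$ and ${\bf y}$ begins $0$'') on $J\!\leftrightarrow\!K$ only $a=0$ passes, since $\tbar$ begins $1$; with $q=G$ (predicate ``$0{\bf y}\in\ovl$ and ${\bf y}$ begins $1$'') on $G\!\leftrightarrow\!H$ only $a=1$ passes, since ${\bf t}$ begins $0$; and the $A$-case forces ${\bf i}=0^\omega$, since the only edge into $A$ is its self-loop. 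This is exactly the asserted disambiguation.

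The step I expect to be delicate is the bookkeeping of the third part --- confirming that a $0^\omega$-tail lands on exactly one of the four $0$-cycles, and that the reduction to the single test $q(\mu^\omega(a))$ does not depend on which cycle-state $q$ is chosen --- together with the one genuinely new computation, that ${\bf t}$ has no prefix that is a square. Everything else is induction plus the uniform reading of Lemma~\ref{AAK}.
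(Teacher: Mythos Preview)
Your argument is correct and is exactly what the paper has in mind: the paper's own proof is the single sentence ``From Lemma~\ref{AAK} and the results above, we get [the theorem]'', so you are simply supplying the details that the paper omits --- the forward induction along the Restivo--Salemi expansion, the backward induction from a witness in $q_n$ for the converse, and the case analysis of the four $0$-cycles for the $0^\omega$ tails. The one place your sketch is thin is the claim that ${\bf t}$ has no square prefix: your descent via ${\bf t}=\mu({\bf t})$ and injectivity handles even-length squares cleanly, but an odd-length square $ww$ needs a separate short parity argument (comparing $t_m$ with $t_{2m}$ where $|w|=2m+1$), or you can simply cite $0{\bf t}\in\ovl$ from \cite{Allouche&Currie&Shallit:1998}, which the paper already invokes for Lemma~\ref{l1}.
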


\begin{corollary}
Each of the 11 sets $A, B, \ldots, K$ is uncountable.
\end{corollary}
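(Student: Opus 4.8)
The plan is to exhibit, from every state of the automaton of Figure~\ref{nice}, a full binary tree of infinite paths, and then show these paths code pairwise distinct elements of the corresponding set. For $S\in\{A,B,\ldots,K\}$ write $X_S$ for the subset of $\Sigma^\omega$ named by $S$ (so $X_A=\ovl$). Two structural facts about the transition table, read off from Lemma~\ref{AAK} and the subsequent emptiness verifications, drive everything: every state has at least one outgoing transition (so no path ever dies), and the only states of out-degree $1$ are $F$, whose sole transition goes to $G$, and $I$, whose sole transition goes to $J$ --- while $G$ and $J$ each have out-degree $\ge 2$. Hence two states of out-degree $1$ never occur consecutively along a path, so along every infinite path, from any state, a state of out-degree $\ge 2$ recurs infinitely often. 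Together with the no-dead-ends property this lets one embed a full infinite binary tree into the tree of finite paths from any fixed state $S$, every node of which extends to an infinite path; so there are at least $2^{\aleph_0}$ infinite paths starting at $S$. Of these, only countably many end in $0^\omega$, since such a path is completely determined by the finite prefix after which all symbols equal $0$.

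Next, every state is reachable from $A$: directly $A\to B,C,D,E$, then $E\to I\to J\to K$ and $C\to F\to G\to H$. Fix, for each $S$, a finite word $w_S$ over $\Delta$ labelling a path $A=Q_0\xrightarrow{a_1}Q_1\xrightarrow{a_2}\cdots\xrightarrow{a_k}Q_k=S$. Let ${\bf i}$ be an infinite path from $S$ that does not end in $0^\omega$. Then $w_S{\bf i}$ is an infinite path from $A$ not ending in $0^\omega$, so by Theorem~\ref{main} it codes a well-defined word ${\bf z}\in\ovl$; by the uniqueness and prefix properties of the expansion~(\ref{qq}), the word ${\bf x}'$ obtained by deleting the first $|w_S|$ terms of this expansion is itself a well-defined infinite word --- precisely the one coded by ${\bf i}$ --- and ${\bf z}=p_{a_1}\mu(p_{a_2}\mu(\cdots\mu({\bf x}')\cdots))$. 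Each transition $Q_{j-1}\xrightarrow{a_j}Q_j$ is, by construction, one of the equivalences of Lemma~\ref{AAK}, which asserts exactly that a word lies in $X_{Q_j}$ iff $p_{a_j}\mu(\text{that word})$ lies in $X_{Q_{j-1}}$. Applying these $k$ equivalences in turn --- a finite chain --- gives ${\bf z}\in A\iff{\bf x}'\in X_S$. Since ${\bf z}\in\ovl=A$, we conclude ${\bf x}'\in X_S$.

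Finally, distinct paths ${\bf i}\ne{\bf i}'$ from $S$, neither ending in $0^\omega$, give distinct paths $w_S{\bf i}\ne w_S{\bf i}'$ from $A$, hence distinct coded words ${\bf z}\ne{\bf z}'$ by the uniqueness of the expansion~(\ref{qq}), hence ${\bf x}'\ne{\bf x}''$ because $\mu$ and prefixing are injective. So $X_S$ has at least $2^{\aleph_0}$ elements and is uncountable; taking $S=A,B,\ldots,K$ proves the corollary. The one delicate point is the membership ${\bf x}'\in X_S$ for a path issuing from $S$: a direct induction would try to unwind the transition equivalences infinitely and never close up, which is why the argument is routed through the initial state $A$, using only the finite chain of equivalences along $w_S$ and anchoring it with the genuine fact ${\bf z}\in\ovl$ furnished by Theorem~\ref{main}.
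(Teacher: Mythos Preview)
Your argument is correct, and it takes a genuinely different route from the paper's. The paper proves only the case $K$ and says the others are similar: it simply exhibits an explicit family of paths from $K$, namely $\{13010,1301000\}^\omega$, observes these are uncountably many, and appeals to the (implicitly asserted) correspondence between paths leaving $K$ and elements of $K$. By contrast, you give a uniform structural argument: you read off from the transition table that the only out-degree-$1$ states are $F$ and $I$, that each feeds into a state of out-degree $\ge 2$, and hence that every infinite path branches infinitely often, yielding a full binary tree of paths from every state; then you discard the countably many paths ending in $0^\omega$ and, crucially, you carefully \emph{justify} the correspondence between paths from $S$ and elements of $X_S$ by routing through $A$ via a finite reachability word $w_S$ and invoking Theorem~\ref{main} once at $A$, unwinding only the finite chain of equivalences from Lemma~\ref{AAK} along $w_S$. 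The paper's approach is shorter and more concrete; yours is more uniform (one argument for all eleven states at once) and more rigorous about the point the paper leaves implicit, namely why a path from $S$ actually codes a word in $X_S$.
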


\begin{proof}
We prove this for $K$, with the proof for the other sets being similar.
Elements in the set $K$ correspond to those infinite paths leaving the state $K$
in Figure~\ref{nice}.  It therefore suffices to produce uncountably
many distinct paths leaving $K$.  One way to do this, for example, is
by $\lbrace 13010, 1301000 \rbrace^\omega$.
\end{proof}

\section{The lexicographically least overlap-free word}

We now recover a theorem of \cite{Allouche&Currie&Shallit:1998}:

\begin{theorem}
The lexicographically least infinite binary overlap-free word  is
$001001 \tbar$.
\end{theorem}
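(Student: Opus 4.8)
The plan is to play off the automaton of Figure~\ref{nice} against the fact that $\mu$ preserves the lexicographic order. First I would record the elementary monotonicity observation: if ${\bf u}<{\bf v}$ and $i$ is the first position at which they disagree, then $\mu({\bf u})$ and $\mu({\bf v})$ agree on their first $2i$ letters and thereafter read $01$ versus $10$, so $\mu({\bf u})<\mu({\bf v})$; consequently $\mu$, and likewise every map ${\bf w}\mapsto w\,\mu({\bf w})$ with $w$ a fixed finite word, is strictly increasing. I would also record, by exhibiting the path $A\xrightarrow{2}C\xrightarrow{0}F\xrightarrow{3}G\xrightarrow{0}H\xrightarrow{0}G\xrightarrow{0}\cdots$ in Figure~\ref{nice} and invoking Theorem~\ref{main} (this path ends in the $G$--$H$ cycle, hence is coded with a trailing $3$), that $001001\,\tbar$ is overlap-free. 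It then suffices to prove that every ${\bf x}\in\olf$ satisfies ${\bf x}\ge 001001\,\tbar$.

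So fix ${\bf x}\in\olf$ and let $(i_j)_{j\ge1}$ be the path out of $A$ coding it (Theorem~\ref{main}). Inspecting $p_{i_1}\mu(\cdots)$: if $i_1\in\{3,4\}$ then $\bf x$ begins with $1$; if $i_1=0$ it begins $01$ or $10$; if $i_1=1$ it begins $010$, $00101$, or $00110$ — in all these cases ${\bf x}>001001\,\tbar$. Hence $i_1=2$ and ${\bf x}=00\,\mu({\bf w}_1)$ with ${\bf w}_1\in C$. From $C$ the only transitions are $0\to F$ and $3\to E$, and from $F$ the only transition is $3\to G$ (this is exactly what the tabulated empty transitions say). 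If $i_2=3$ then ${\bf w}_1=1\,\mu({\bf v})$ with ${\bf v}\in E$, so ${\bf w}_1$ begins $1011$ and ${\bf x}=00\,\mu({\bf w}_1)$ begins $0010011010$, which exceeds $001001\,\tbar=0010011001\cdots$. Otherwise $i_2=0$, forcing $i_3=3$, so that ${\bf x}=00\,\mu^2\!\bigl(1\,\mu({\bf w}_3)\bigr)=001001\,\mu^3({\bf w}_3)$ with ${\bf w}_3\in G$. Since $\mu(\tbar)=\tbar$, the monotonicity of $\mu^3$ and of ${\bf w}\mapsto 001001\,{\bf w}$ reduces everything to the single claim: ${\bf w}\ge\tbar$ for every ${\bf w}\in G$.

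This is the heart of the argument. The all-$0$ path out of $G$ is the $G$--$H$ cycle, so the word it codes satisfies ${\bf y}=\mu^2({\bf y})$ and begins with $1$, forcing ${\bf y}=\tbar$ (the unique fixed point of $\mu^2$ beginning with $1$); in particular $\tbar\in G$. Now take ${\bf w}\in G$ with ${\bf w}\ne\tbar$; its coding path out of $G$ uses only the labels $0$ and $3$ and is not all-$0$, so let $k$ be the position of its first $3$. The $k-1$ leading $0$'s traverse the $G$--$H$ cycle, so ${\bf w}=\mu^{k-1}({\bf s})$, where ${\bf s}$ is the word occupying the state (still $G$ or $H$) reached after those $k-1$ steps, with ${\bf s}$'s own path beginning with a $3$; correspondingly $\tbar=\mu^{k-1}(\tbar)$. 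By monotonicity it suffices to show ${\bf s}>\tbar$. But the $3$-transition out of $G$ leads to $D$ and the one out of $H$ leads to $E$, so ${\bf s}=1\,\mu({\bf v})$ with either ${\bf v}\in D$ (whence ${\bf s}$ begins $101$ or $110$) or ${\bf v}\in E$, i.e.\ ${\bf v}$ begins $010$ (whence ${\bf s}$ begins $1011$). In every case ${\bf s}$ begins $101$, $110$, or $1011$, each lexicographically larger than the prefix $1001$ of $\tbar$, so ${\bf s}>\tbar$. This proves the claim, hence ${\bf x}=001001\,\mu^3(\tbar)=001001\,\tbar$ in the remaining case; combined with the overlap-freeness noted above, $001001\,\tbar$ is the lexicographically least word of $\olf$.

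The step I expect to be the real obstacle is the bookkeeping inside this last claim: one must be sure that the all-$0$ continuation yields the very same word $\tbar$ whether one is at $G$ or at $H$, that the three prefix possibilities after the first $3$-transition are genuinely exhaustive, and that $\mu^{k-1}$ is applied to the matching pair $({\bf s},\tbar)$ so that strict monotonicity gives ${\bf w}>\tbar$ rather than merely ${\bf w}\ge\tbar$. By contrast, checking the empty transitions out of $C$, $F$ (and, for the key claim, out of $G$ and $H$) and the prefix computations in the cases $i_1\in\{0,1,3,4\}$ is routine given Lemma~\ref{AAK} and the empty-transition list already compiled.
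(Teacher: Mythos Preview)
Your proof is correct and follows essentially the same route as the paper's: both show the code must begin $203$, land in state $G$, and then argue that the all-$0$ continuation (coding $\tbar$) beats any continuation containing a $3$; you make the order-preservation of $\mu$ an explicit lemma and reduce to the clean claim ``every ${\bf w}\in G$ satisfies ${\bf w}\ge\tbar$'', whereas the paper argues the same point more tersely by comparing the two-symbol code choices $00$ versus $30,31,33$ at each stage of the $G$--$H$ cycle. One small slip in wording: ``its coding path out of $G$ uses only the labels $0$ and $3$'' is literally false once the path leaves $G\cup H$, but all you actually need (and correctly use) is that while the path is still in $G\cup H$ the only labels available are $0$ and $3$, so the first non-$0$ label must be a $3$.
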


\begin{proof}
Let $\bf x$ be the lexicographically least infinite word, and 
let $\bf y$ be its code.  Then ${\bf y}[1]$ must be $2$, since any
other choice codes a word that starts with $01$ or something lexicographically
greater.  Once ${\bf y}[1] = 2$ is chosen, the next two symbols must be
${\bf y}[2..3] = 03$.   Now we are in state $G$.  We argue that the lexicographically
least string that follows causes us to alternate
between states $G$ and $H$ on $0$, producing $100\cdots$.
For otherwise our only choices are $30$, $31$,
or (if we are in $G$) $33$ as the next two symbols, and all of these code
a word lexicographically greater than $100$.
Hence ${\bf y} = 203\, 0^\omega; 1$ is the code for the
lexicographically least sequence, and this codes $001001 \tbar$.
\end{proof}

\section{Automatic infinite binary overlap-free words}

As a consequence of Theorem~\ref{main}, we can give a complete description
of the infinite binary overlap-free words that are $2$-automatic
\cite{Allouche&Shallit:2003}.  Recall that an infinite word $(a_n)_{n \geq 0}$
is $k$-automatic if there exists a deterministic finite automaton with
output that, on input $n$ expressed in base $k$, produces an output
associated with the state last visited that is equal to $a_n$.

\begin{theorem}
An infinite binary overlap-free word is $2$-automatic if and only
if its code is both
specified by the DFA given above in Figure 1, and is ultimately periodic.
\label{auto}
\end{theorem}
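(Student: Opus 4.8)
The plan is to prove both directions by relating the automatic‑ness of the word $\bf x$ to the automatic‑ness (equivalently, ultimate periodicity) of its code $\bf i$ from Corollary~1, using the fact that the map $\bf y \mapsto \mu(\bf y)$ interacts cleanly with base‑$2$ automata. For the ``if'' direction, suppose the code $\bf i = (i_j)_{j\ge1}$ is accepted by the automaton of Figure~1 and is ultimately periodic, say $\bf i = u v^\omega$ (together with a disambiguating tail $1$ or $3$ in the exceptional cases). I would first handle the purely periodic tails: a cycle on $A$, or between $B,D$, $J,K$, or $G,H$ on the symbol $0$, by Theorem~3 contributes a suffix of $\bf x$ equal to $\bf t$ or $\tbar$, which is $2$-automatic. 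Then I would show that prepending a single index $p_i \in \{\epsilon,0,00,1,11\}$ and applying $\mu$ each preserve $2$-automaticity of an infinite word: $\mu$ doubles indices (reading $n$ in base $2$, the letter at position $n$ of $\mu(\bf y)$ depends only on $\lfloor n/2\rfloor$ and the bit $n \bmod 2$), and prepending a bounded prefix is a base‑$2$-automatic operation since ``is $n < k$'' and ``$n - k$ in base $2$'' are finite‑state for fixed $k$. Since $\bf x$ is obtained from a $2$-automatic tail by finitely many such operations read off the finite preperiod $u$ — and the periodic part $v^\omega$ can be absorbed because iterating a fixed block of these operations $|v|$ times is again a single base‑$2^{|v|}$-automatic, hence (by the equivalence of $k$- and $k^m$-automaticity) base‑$2$-automatic, operation — the word $\bf x$ is $2$-automatic.

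For the ``only if'' direction, suppose $\bf x \in \ovl$ is $2$-automatic; we must show its code $\bf i$ is ultimately periodic. Here I would invert the factorization: by Theorem~2, $\bf x = p_{i_1}\mu(\bf y_1)$ with $\bf y_1 \in \ovl$, and $i_1$ and hence $\bf y_1$ is obtained from $\bf x$ by a base‑$2$-automatic transformation (strip a bounded‑length, state‑determined prefix, then invert $\mu$ by reading only even‑or‑odd indexed positions). Crucially, the inverse of $\mu$ on overlap-free words is well defined and, applied to a $2$-automatic word, yields a $2$-automatic word, because ``halving'' the index set is a base‑$2$ finite‑state operation. Thus each $\bf y_m$ in the iterated factorization $\bf y_{m-1} = p_{i_m}\mu(\bf y_m)$ is again $2$-automatic, and moreover — this is the key quantitative point — all the $\bf y_m$ can be taken to have uniformly bounded automaton size: stripping a prefix of length $\le 2$ and inverting $\mu$ changes the underlying DFA in a controlled way, so up to isomorphism only finitely many automata (with a bounded number of states over the binary alphabet) can arise as we iterate. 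Since $i_m$ is determined by the first five letters of $\bf y_{m-1}$, which in turn is determined by the (isomorphism type of the) automaton for $\bf y_{m-1}$, the sequence of pairs (automaton type, index $i_m$) takes finitely many values, so it is eventually periodic by the pigeonhole principle — hence $\bf i$ is ultimately periodic. Acceptance by the automaton of Figure~1 is automatic since $\bf x$ is overlap-free (Theorem~4).

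The main obstacle is making the ``uniformly bounded automaton size'' claim in the ``only if'' direction precise and correct: inverting $\mu$ on a $2$-automatic word does not by itself give an a priori bound on the number of states of the resulting automaton, so one must argue more carefully. One clean way is to use the standard characterization of $k$-automaticity via finiteness of the \emph{$k$-kernel} $\{(a_{k^e n + r})_{n\ge0} : e \ge 0,\ 0 \le r < k^e\}$: stripping a bounded prefix and inverting $\mu$ each send the $2$-kernel of $\bf x$ into a set of bounded size depending only on $|\mathrm{kernel}(\bf x)|$, so \emph{every} $\bf y_m$ has $2$-kernel of size at most some fixed $N = N(\bf x)$. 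Since there are only finitely many subsets of the (finite) kernel of $\bf x$ that can arise — and each $\bf y_m$'s future behavior in the factorization depends only on its kernel as a set together with the base point — the pigeonhole argument goes through. I would present this kernel‑based version as the technical heart, with the prefix‑stripping and $\mu$‑inversion kernel computations (both short and routine) stated as lemmas and their verifications left to the reader or relegated to a remark.
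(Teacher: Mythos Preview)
Your ``only if'' direction is essentially the paper's argument, phrased as a direct pigeonhole rather than as a contrapositive. The paper observes (after discarding a possible leading $2$ or $4$, so that only indices in $\{0,1,3\}$ remain) that each $2$-decimation of ${\bf x}$ shifts the code by one symbol, up to complementation; hence a non-ultimately-periodic code forces infinitely many distinct $2$-kernel elements. Your refined version --- the ${\bf y}_m$ lie in the $2$-kernel of ${\bf x}$, so only finitely many can occur --- is the same observation. One point to tighten: the indices $2$ and $4$ can appear only in the very first position of the code (this is visible from the automaton), so from the second step onward $|p_{i_m}|\le 1$ and each ${\bf y}_m$ is \emph{literally} an element of the $2$-kernel of ${\bf y}_1$; your detour through ``bounded automaton size'' then becomes unnecessary.

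Your ``if'' direction, however, has a genuine gap. You correctly handle the preperiod $u$ and the special case where the code ends in $0^\omega$ (tail equal to ${\bf t}$ or $\tbar$). But for a general period $v$, the assertion that ``iterating a fixed block of these operations $|v|$ times is a single base-$2^{|v|}$-automatic operation'' does not show that the \emph{fixed point} of that operation --- the word ${\bf w}$ coded by $v^\omega$ --- is $2$-automatic. Being the fixed point of a map that ``looks automatic'' is not the same as being an automatic sequence; the map ${\bf y}\mapsto t\,\mu^{|v|}({\bf y})$ is not a $2^{|v|}$-uniform morphism because of the prefix $t$, so Cobham's theorem does not apply directly. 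The paper supplies precisely this missing step: writing ${\bf w}=t\,\varphi({\bf w})$ with $\varphi=\mu^k$, unrolling to ${\bf w}=t\,\varphi(t)\,\varphi^2(t)\cdots$, and then bounding the $2$-kernel directly by noting that $2$-decimation of $\mu^j(t)$ yields $\mu^{j-1}(t)$ or $\mu^{j-1}(\overline{t})$, so every kernel element has the form $u\,\mu^i(v')\,\mu^{i+k}(v')\cdots$ with $|u|\le 2|t|$, $v'\in\{t,\overline{t}\}$, $1\le i\le k$ --- a finite set. Without this computation (or an equivalent construction realizing ${\bf w}$ as the coding of a genuine uniform-morphism fixed point on an enlarged alphabet), the ``if'' direction is incomplete.
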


First, we need two lemmas:

\begin{lemma}
An infinite binary word ${\bf x} = a_0 a_1 a_2 \cdots$
is $2$-automatic if and only if
$\mu({\bf x})$ is $2$-automatic.
\label{tm}
\end{lemma}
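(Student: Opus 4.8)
The plan is to prove both directions by relating the $2$-automaticity of $\mathbf{x}$ to a property of its base-$2$ digit sequence that is preserved under $\mu$. The cleanest route uses the standard characterization (see Allouche--Shallit) that an infinite word $\mathbf{x} = a_0 a_1 a_2 \cdots$ is $2$-automatic if and only if its \emph{$2$-kernel} --- the set of subsequences $\{(a_{2^e n + r})_{n \geq 0} : e \geq 0,\ 0 \leq r < 2^e\}$ --- is finite. So I would first unwind how the digits of $\mu(\mathbf{x})$ depend on those of $\mathbf{x}$: writing $\mathbf{y} = \mu(\mathbf{x}) = b_0 b_1 b_2 \cdots$, we have $b_{2n} = a_n$ and $b_{2n+1} = \overline{a_n}$ for all $n \geq 0$, since $\mu$ maps each letter $a_n$ to the length-$2$ block $a_n \overline{a_n}$.

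For the forward direction, assume $\mathbf{x}$ is $2$-automatic. Then $(a_n)_n$ is $2$-automatic, and so is its complement $(\overline{a_n})_n$ (relabel outputs). The word $\mathbf{y}$ is obtained by interleaving these two $2$-automatic sequences, i.e. $b_n = a_{n/2}$ if $n$ is even and $b_n = \overline{a_{(n-1)/2}}$ if $n$ is odd. I would invoke the general fact that if $(c^{(0)}_n)_n$ and $(c^{(1)}_n)_n$ are $k$-automatic then so is their interleaving $(d_n)$ with $d_{2n+i} = c^{(i)}_n$; the easiest proof of this is again via kernels: every kernel subsequence of $(d_n)$ is, after one more halving step, a kernel subsequence of $(c^{(0)}_n)_n$ or $(c^{(1)}_n)_n$, so the kernel of $(d_n)$ is finite. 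Hence $\mathbf{y} = \mu(\mathbf{x})$ is $2$-automatic.

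For the converse, suppose $\mu(\mathbf{x}) = (b_n)_n$ is $2$-automatic. Then the even-indexed subsequence $(b_{2n})_n = (a_n)_n = \mathbf{x}$ is obtained by a single step of the kernel operation, so it lies in (the closure of) the $2$-kernel of $\mu(\mathbf{x})$, which is finite; therefore the $2$-kernel of $\mathbf{x}$ is finite, so $\mathbf{x}$ is $2$-automatic. (Alternatively, phrase this via the theorem that a subsequence $(b_{kn+r})_n$ of a $k$-automatic word is $k$-automatic.) Either phrasing makes the converse essentially immediate.

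The main obstacle is not conceptual but a matter of choosing which black-box characterization of $2$-automaticity to lean on so the argument stays short; if one insists on arguing directly with automata rather than kernels, the forward direction --- building a DFA for the interleaving, which amounts to reading the low-order base-$2$ digit first to decide which ``copy'' you are in and then simulating the machine for $\mathbf{x}$ on the remaining digits --- requires a little care about digit order (most-significant vs.\ least-significant first) and about the exact convention for the output function, but is entirely routine. I would therefore state the kernel characterization explicitly at the start of the proof with a citation to \cite{Allouche&Shallit:2003} and then dispatch both directions in a few lines as above.
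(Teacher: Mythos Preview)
Your proposal is correct and matches the paper's proof almost exactly: the converse direction is identical (observe $b_{2n}=a_n$, so $\mathbf{x}$ lies in the $2$-kernel of $\mu(\mathbf{x})$), and the forward direction differs only in that the paper invokes the general closure of $k$-automatic sequences under $k$-uniform morphisms \cite[Theorem 6.8.3]{Allouche&Shallit:2003} rather than spelling out the interleaving/kernel argument for $\mu$ specifically. Both are standard closure facts, so the approaches are essentially the same.
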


\begin{proof}
For one direction, we use the fact that the class of $k$-automatic
sequences is closed under uniform morphisms 
(\cite[Theorem 6.8.3]{Allouche&Shallit:2003}).  So if $\bf x$ is
$2$-automatic, so is $\mu({\bf x})$.

For the other, we use the well-known characterization of automatic
sequences in terms of the $k$-kernel
\cite[Theorem 6.6.2]{Allouche&Shallit:2003}:  a sequence
$(c_n)_{n \geq 0}$ is $k$-automatic if and only if its $k$-kernel
defined by
$$ \lbrace (c_{k^e n + i})_{n \geq 0} \ : \ e \geq 0 \text{ and }
	0 \leq i < k^e \rbrace$$
is finite.  Furthermore, each sequence in the $k$-kernel is $k$-automatic.

Now if ${\bf y} = \mu({\bf x}) = b_0 b_1 b_2 \cdots$, 
then $b_{2n} = a_n$.  So one of the sequences in the $2$-kernel of
$\bf y$ is $\bf x$, and if $\bf y$ is $2$-automatic, then so is $\bf x$.
\end{proof}

Now we can prove Theorem~\ref{auto}.

\begin{proof}
Suppose the code of $\bf x$ is ultimately periodic.  Then we can write
its code as $y z^\omega$ for some finite words $y$ and $z$.  
Since the class of $2$-automatic sequences is closed under appending
a finite prefix \cite[Corollary 6.8.5]{Allouche&Shallit:2003}, by
Lemma~\ref{tm}, it suffices to show that the word coded by $z^\omega$
is $2$-automatic.

The word $z^\omega$ codes an overlap-free word ${\bf w}$ satisfying
${\bf w} = t \varphi ({\bf w})$, where $t$ is a finite word and
$\varphi$ is a power of $\mu$.  If $t$ is empty the result is clear.
Otherwise, by iteration, we get that
\begin{equation}
{\bf w} = t \varphi(t) \varphi^2(t) \cdots .
\label{varph}
\end{equation}

The $2$-kernel of a sequence is obtained by repeated {\it $2$-decimation},
that is, recursively splitting a sequence into its even- and odd-indexed
terms.  When we apply $2$-decimation to $\mu^k(t)$, where $t$ is a finite word,
we get $\mu^{k-1} (t)$ and $\mu^{k-1} ( \, \overline{t} \, )$.  These words are both
of even length, provided $k$ is at least $1$.  Hence iteratively applying $2$-decimation
to $\bf w$, as given in (\ref{varph}), shows that if $\varphi = \mu^k$, then
the $2$-kernel of $\bf w$ is contained in
$$S := \lbrace u \mu^i(v) \mu^{i+k}(v) \mu^{i+2k}(v) \cdots \ : \ |u| \leq 2|t|
\text{ and } v \in \lbrace t, \overline{t} \rbrace \text{ and } 1 \leq i \leq k \rbrace,$$
which is a finite set.

On the other hand, suppose the code for
$\bf x$ is not ultimately periodic.  Then we show that the
$2$-kernel is infinite.    To see this, note that the code for $\bf x$
contains a $2$ or $4$ only at the beginning, so we can assume without
loss of generality that the code for $\bf x$ contains only the
letters $0, 1, 3$.  Now it is easy to see that if the code for $\bf x$
is $a{\bf y}$ for some letter $a \in \lbrace 0, 1, 3 \rbrace$ and
infinite string ${\bf y} \in \lbrace 0, 1, 3 \rbrace^\omega$, then
one of the sequences in the $2$-kernel (obtained by taking either the odd-
or even-indexed terms) is either 
coded by $\bf y$ or its complement is coded by $\bf y$.
Since the code for $\bf x$ is not ultimately periodic,
there are infinitely many distinct sequences in the orbit of the
code for $\bf x$, under the shift.  (By the orbit of $\bf y$ we mean the set of
sequences of the form ${\bf y}[i..\infty]$ for $i \geq 1$.)
Now infinitely many of these sequences correspond to a sequence in the $2$-kernel, or its
complement.  Hence $\bf x$ is not $2$-automatic.
\end{proof}

\section{A fragility conjecture disproved}

Brown, Rampersad, Shallit, and Vasiga showed that the Thue-Morse
word $\bf t$ is {\it fragile} in the following sense:  if any finite
nonempty set of positions is chosen, and the bits in those positions are
simultaneously flipped to the complement of their original values, the result
has an overlap \cite{Brown&Rampersad&Shallit&Vasiga:2006}.

It is natural to wonder if a similar result holds more generally for
all overlap-free words.  However, the statement must be modified in this
more general setting, as (for example) both $0 {\bf t}$ and $1 {\bf t}$ are
overlap-free.

The author made the following conjecture at the
Oberwolfach meeting in 2010:

\begin{conjecture}
For each infinite binary overlap-free word $\bf w$ there exists a constant
$C$ (depending on $\bf w$) such that if the bits at
any finite nonempty set of positions
$> C$ are flipped, then the result has an overlap.
\end{conjecture}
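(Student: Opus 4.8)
The plan is to \emph{disprove} the conjecture by producing a single word ${\bf w}\in\ovl$ for which no constant $C$ can work: for every $C$ I will exhibit a nonempty \emph{finite} set $S$ of positions, all exceeding $C$, such that flipping the bits of ${\bf w}$ at exactly the positions of $S$ again yields an overlap-free word. The mechanism is the automaton of Theorem~\ref{main}, together with the following observation about codes. Expanding (\ref{qq}), the word coded by $(i_j)_{j\ge1}$ equals the infinite concatenation $\prod_{l\ge1}\mu^{l-1}(p_{i_l})$. Hence if the code is changed in a \emph{single} position $j$, replacing $i_j$ by some $i_j'$ with $|p_{i_j}|=|p_{i_j'}|$, then only the block $\mu^{j-1}(p_{i_j})$ is affected and every later block $\mu^{l-1}(p_{i_l})$ keeps its position, since the replaced block keeps its length $2^{j-1}|p_{i_j}|$. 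In particular, for the swap $\{i_j,i_j'\}=\{1,3\}$ this block is $\mu^{j-1}(0)$ in one word and $\mu^{j-1}(1)=\overline{\mu^{j-1}(0)}$ in the other, so the two coded words differ in \emph{exactly} the $2^{j-1}$ positions of that block, which begins at position $\sum_{l<j}2^{l-1}|p_{i_l}|$, and agree everywhere else.

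Next I would find two distinct walks of the same length from a single state of the automaton in Figure~\ref{nice} back to itself that differ in exactly one position, at which one walk reads $1$ and the other reads $3$. Checking the transitions listed in Lemma~\ref{AAK}, the sequences $u=131033$ and $v=331033$ both label closed walks at state $D$: reading $u$ visits the states $D,C,E,C,F,G,D$ (via (\ref{DC}), (\ref{CE}), (\ref{EC}), (\ref{CF}), (\ref{FG}), (\ref{GD})) and reading $v$ visits $D,D,D,C,F,G,D$ (via (\ref{DD}), (\ref{DD}), (\ref{DC}), (\ref{CF}), (\ref{FG}), (\ref{GD})). Both have length $6$, and $u$ and $v$ agree except in their first letter, which is $1$ for $u$ and $3$ for $v$; note $|p_1|=|p_3|=1$.

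Now let ${\bf w}$ be the overlap-free word coded by $3\,u^\omega$; this is a legitimate infinite path from $A$ (namely $A\to D$ on $3$, then the loop on $u$ forever), and it does not end in $0^\omega$, so by Theorem~\ref{main} it codes a unique element of $\ovl$. For each $k\ge1$ let ${\bf w}^{(k)}$ be coded by $3\,u^{k-1}\,v\,u^\omega$, obtained from the code of ${\bf w}$ by replacing the $k$-th copy of $u$ with $v$; since $v$ is also a closed walk at $D$, this is again a legitimate path not ending in $0^\omega$, so ${\bf w}^{(k)}\in\ovl$. Because $u$ and $v$ differ only in their first letter, the codes of ${\bf w}$ and ${\bf w}^{(k)}$ differ in exactly one position $j(k)=6k-4$, where the values are $1$ and $3$; by the observation of the first paragraph, ${\bf w}$ and ${\bf w}^{(k)}$ then differ in exactly the $2^{j(k)-1}$ positions of one block, which begins at position $\sum_{l<j(k)}2^{l-1}|p_{i_l}|$, and this starting position tends to infinity as $k\to\infty$. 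Therefore, given any $C$, choosing $k$ large enough that this block begins beyond position $C$ makes the set $S$ of positions of the block a nonempty finite set lying entirely above $C$, and flipping the bits of ${\bf w}$ at the positions of $S$ produces precisely ${\bf w}^{(k)}\in\ovl$. Hence no constant $C$ has the property required by the conjecture for this particular ${\bf w}$, so the conjecture is false. (Since the code of ${\bf w}$ is ultimately periodic, Theorem~\ref{auto} moreover shows that this counterexample is $2$-automatic.)

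The main obstacle is really just the combinatorial search in the second step: one must exhibit two distinct equal-length closed walks at one state that differ in a single coordinate, with a $1$ in that coordinate for one walk and a $3$ for the other (so that the $|p|$-lengths match and the induced change to the coded word is confined to one finite block). Once such walks are in hand --- and $u=131033$, $v=331033$ at $D$ work, as verified above against Lemma~\ref{AAK} --- everything else is forced by the product expansion $\prod_{l\ge1}\mu^{l-1}(p_{i_l})$ of the coded word, which localizes the effect of a single same-length code change to one finite, arbitrarily remote window.
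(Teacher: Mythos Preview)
Your disproof is correct and follows essentially the same strategy as the paper: both arguments locate a pair of equal-length closed walks in the Fife automaton that differ in a single coordinate where one reads $1$ and the other $3$, and then observe via the product expansion $\prod_{l\ge1}\mu^{l-1}(p_{i_l})$ that this localizes the difference between the two coded overlap-free words to one finite, arbitrarily remote block. The paper uses the loops $113011$ and $313011$ at state $B$ (prefixed by $1$ from $A$), while you use the complementary loops $131033$ and $331033$ at state $D$ (prefixed by $3$ from $A$); your write-up is in fact more detailed than the paper's sketch.
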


Using our result we can disprove this conjecture.  For consider
the infinite words coded by $1 \lbrace 113011,313011 \rbrace^\omega$.
By examining the automaton,
each such word is easily seen to be a valid
code for an overlap-free word.
These words have blocks that line up exactly at the same positions, but 
each $6th$ block can be replaced by the appropriate power of
$\mu$ evaluated at either $0$ or $1$, and each such choice gives a 
distinct overlap-free word.  

\section{Remarks}

According to a theorem of Karhum\"aki and the author 
\cite{Karhumaki&Shallit:2004}, there is a similar factorization
theorem for all exponents $\alpha$ with $2 < \alpha \leq {7 \over 3}$.
Recently we have proven similar results for $\alpha = {7 \over 3}$
\cite{Rampersad&Shallit&Shur:2011}.

I am grateful to the referees for a careful reading of the manuscript.

\end{document}